\documentclass[%
 reprint,
superscriptaddress,
 amsmath,amssymb,
 aps,
prl,
]{revtex4-2}

\usepackage{graphicx}
\usepackage{dcolumn}
\usepackage{bm}
\usepackage{hyperref}

\usepackage{amsmath}
\usepackage{amssymb}
\usepackage{stackrel}
\usepackage{tikz-cd}
\usepackage{tikz}
\usepackage[cm]{fullpage}
\usepackage{amsthm}
\usepackage{mathtools}
\usepackage{enumitem}
\usepackage{bbold}
\usepackage[normalem]{ulem}
\usepackage{hyperref}
\usepackage{cleveref}
\usepackage{appendix}
\usepackage{comment}
\usepackage{subcaption}
\usepackage{centernot}
\usepackage{braket}

\newtheorem{Theorem}{Theorem}
\newtheorem{Lemma}{Lemma}
\newtheorem{Proposition}{Proposition}

\theoremstyle{definition}
\newtheorem{Definition}{Definition}

\DeclareMathOperator{\Odd}{Odd}

\renewcommand{\H}{\mathcal{H}}


\begin{document}

\preprint{APS/123-QED}

\title{Parity Quantum Computing as YZ-Plane Measurement-Based Quantum Computing}

\author{Isaac D. Smith}
\email{isaac.smith@uibk.ac.at}
\author{Hendrik Poulsen Nautrup}
\affiliation{University of Innsbruck, Institute for Theoretical Physics}
\author{Hans J. Briegel}
\affiliation{University of Innsbruck, Institute for Theoretical Physics}
\affiliation{Fachbereich Philosophie, Universität Konstanz, Konstanz, Germany}

\date{\today}

\begin{abstract}
We show that universal parity quantum computing employing a recently introduced constant depth decoding procedure is equivalent to measurement-based quantum computation (MBQC) on a bipartite graph using only YZ-plane measurements. We further show that \textit{any} unitary MBQC using only YZ-plane measurements must occur on a bipartite graph. These results have a number of consequences and open new research avenues for both frameworks.
\end{abstract}

\maketitle




In the present era of pre-fault-tolerant quantum computation \cite{Preskill2018quantumcomputingin}, there exists an array of theoretical proposals for computation that display certain advantages and differing levels of suitability for implementation on current physical devices.

Parity quantum computation \cite{lechner2015quantum, fellner_22, Fellner_PRA_22, Fellner2023parityquantum, DriebSchon2023parityquantum, Ender2023parityquantum} refers to one such proposal, initially based on quantum annealing \cite{lechner2015quantum}. The universal parity computing framework \cite{fellner_22} leverages the properties of a certain type of quantum state encoding, the parity encoding. This encoding maps an $n$-qubit logical state onto $n(n+1)/2$ physical qubits, some of which obtain parity information related to subsets of logical qubits. Consequently, certain rotations acting locally on these parity qubits translate to multi-qubit logical rotations on the corresponding subset \cite{fellner_22}. The parity code is in particular a stabiliser code \cite{gottesman1997stabilizer,nielsen_chuang_2010} and many of the properties of the code are well understood using the stabiliser formalism.

Stabiliser states and stabiliser codes are known to have a canonical form, namely graph states \cite{Hein_04,hein2006ent} and graph codes \cite{schlingemann2002logical,schlingemann2001stabilizer,Schlingemann_01} respectively. Graph states form an important class of highly entangled states that support measurement-based quantum computation (MBQC) \cite{raussendorf2001one,briegel2009measurement,raussendorf2001computational,raussendorf2003measurement,walther2005experimental,nautrup2023measurement}. MBQC is a well-known alternative to the quantum circuit model driven by single qubit projective measurements instead of unitary gates.

Recently, a proposal for measurement-based encoding and decoding procedures were put forward for the parity computing regime \cite{messinger2023constant}, demonstrating beneficial properties in terms of their computational depth. Due to the close connection between stabiliser codes and graph codes, an investigation of the potential connections to MBQC is warranted, which we initiate in this work. 

After presenting the required background, we demonstrate that every parity code is local Clifford equivalent to a bipartite graph code (\Cref{prop:parity_is_bipartite}). Consequently, we show that parity quantum computation with the measurement-based decoding is MBQC where all measurements are from the YZ-plane of the Bloch sphere, and where re-entanglement and some local operations are allowed (\Cref{thm:parity_is_MBQC}). We further show that any MBQC using only YZ-plane measurements and with input and output sets of equal size must use bipartite graph states (\Cref{thm:MBQC_YZ_I_O}). To conclude, we briefly outline some consequences of these results for both computing paradigms.


\section{Background} \label{sec:background}

\textit{Parity Quantum Computing - } A parity quantum computation commences by encoding the computational input state using the LHZ architecture \cite{lechner2015quantum} (see \Cref{fig:LHZ_example}). The computation proceeds by applying unitaries from a native gate set for the parity encoding, such as that outlined in \cite{fellner_22}, which largely consists of local rotations. To finish, a decoding procedure returns the computational output. Presently, we will focus on the parity encoding procedure and universal gate set presented in \cite{fellner_22} in combination with the measurement-based decoding procedure outlined in \cite{messinger2023constant}.

The parity encoding procedure maps a state on $n$ logical qubits to a state on $n(n+1)/2$ physical qubits. Following \cite{fellner_22}, we consider an LHZ layout where $n$ physical qubits (the `data' qubits) directly correspond to the $n$ logical qubits. The remaining $N = n(n-1)/2$ qubits will be referred to as `parity' qubits. We denote the sets of data and parity qubits by $I$ and $V\setminus I$ respectively.

Encoding consists of applying a sequence of CNOTs to an input state $\ket{\psi}$ and the parity qubits, which are all initialised to $\ket{0}$. Letting $C$ represent the set of control-target pairs, the encoded state is:
\begin{align}
\ket{LHZ_{\psi}} = U_{\text{enc}}\ket{0}^{\otimes N}\ket{\psi}  = \prod_{(c,t) \in C}\text{CNOT}_{(c,t)}\ket{0}^{\otimes N}\ket{\psi}.
\end{align}
Different constraint sets $C$ can produce the same encoded state. For example, $C$ could contain only pairs where every control is a data qubit and every target a parity qubit, which may involve non-nearest neighbour interactions for a given physical layout. Equivalently, it is possible to take $C$ to contain only nearest-neighbour CNOTs, where now some control qubits are parity qubits (see e.g., \Cref{fig:LHZ_example}). The compilation of a given parity code into a nearest-neighbour layout is an interesting optimisation problem and a topic of ongoing research \cite{Ender2023parityquantum,DriebSchon2023parityquantum,Fellner2023parityquantum}.

For each $\ket{\psi}$, $\ket{LHZ_{\psi}}$ is a state in the parity codespace for the given architecture. The stabilizer of the parity code is generated by operators of the form
\begin{align}
K'_{(ij...k)} := Z_{(ij...k)} \otimes Z_{i} \otimes Z_{j} \otimes ... \otimes Z_{k}
\end{align}
where the single subscripts $i$, $j$, etc. indicate data qubits and the subscript $(ij...k)$ indicates the parity qubit that encodes the parity information of data qubits $i$, $j$ and so on. The operators $K'_{(ij...k)}$ for all parity qubits are mutually independent and generate the codespace. Note that often each parity qubit is taken to encode the parity of just two data qubits.

A benefit of this encoding consists in the ability to implement diagonal multi-qubit logical operations via single qubit physical rotations. For example, applying a local $Z$-rotation to a parity qubit $(ij)$ effectively applies a logical $Z_{i} \otimes Z_{j}$-rotation, from which a controlled-phase gate between logical qubits $i$ and $j$ can be obtained via local $Z$-rotations on the corresponding data qubits \cite{fellner_22}. For full universal quantum computation, in conjunction with $Z$-rotations and controlled-phase gates, it suffices to be able to implement a logical $X$-rotation. For a data qubit $i$, this can be done via a decoding sequence of CNOTs along all parity qubits containing parity information about $i$, a local $X$-rotation at data qubit $i$, and a re-encoding sequence of CNOTs (see \cite{fellner_22} for more details).

Until recently, the typical parity decoding procedures involved applying the encoding sequence of CNOT gates in reverse. In \cite{messinger2023constant}, an equivalent decoding procedure was proposed involving local $X$-measurements on parity qubits followed by local $Z$-operations conditional on measurement outcomes. One benefit of this approach is that full and partial decoding can be performed in constant-depth regardless of the size of the architecture.

For this gate set and measurement-based decoding, a unitary $U$ applied to input state $\ket{\psi}$ in the parity regime can be decomposed into a series of layers, where each layer involves parity qubit rotations followed by decoding. For notational simplicity, we consider full decoding in each layer. Denoting the set of layers by $L$, the set of data qubits by $I$ and the set of parity qubits by $V \setminus I$, the computation can be written as:
\begin{align}
U\ket{\psi} &= \prod_{l =1}^{L} \left(U_{\text{data}}^{(l)}(\boldsymbol{\alpha}^{(l)},\boldsymbol{\phi}^{(l)}) D_{\text{dec}}^{(l)}(\boldsymbol{\theta}^{(l)})  U_{\text{enc}} \ket{0}^{\otimes N}\right) \ket{\psi} \label{eq:parity_comp}
\end{align}
where $U_{\text{enc}}$ acts on both the ancilla $\ket{0}^{\otimes N}$ as well as the qubits in $I$, $D_{\text{dec}}^{(l)}(\boldsymbol{\theta}^{(l)})$ is the operator involving parity qubit rotations and decoding for layer $l$ given by
\begin{align}
D_{\text{dec}}^{(l)}(\boldsymbol{\theta}^{(l)}) := \bigotimes_{(ij...k) \in V\setminus I} \bra{+_{(ij...k)}}R_{Z_{(ij...k)}}(\theta_{(ij...k)}^{(l)})
\end{align}
and $U_{\text{data}}^{(l)}(\boldsymbol{\alpha}^{(l)},\boldsymbol{\phi}^{(l)})$ is the product of local rotation on data qubits for layer l given by:
\begin{align}
U_{\text{data}}^{(l)}(\boldsymbol{\alpha}^{(l)},\boldsymbol{\phi}^{(l)}) := \bigotimes_{i \in I} R_{X_{i}}(\alpha_{i}^{(l)})R_{Z_{i}}(\phi_{i}^{(l)}).
\end{align}
Note that the only distinction in the case of partial decoding is that $D_{\text{dec}}^{(l)}$ contains measurements in some subset of $V \setminus I$, the ancilla prepared in $\ket{0}$ in the subsequent layer correspond to the same subset, and the relevant $U_{\text{enc}}$ is replaced by $U_{\text{enc}}^{(l+1)}$ which applies only the appropriate CNOTs to re-encode back to the full LHZ state. 

\begin{figure*}
\begin{subfigure}{0.45\textwidth}
\centering
\includegraphics[width=\textwidth]{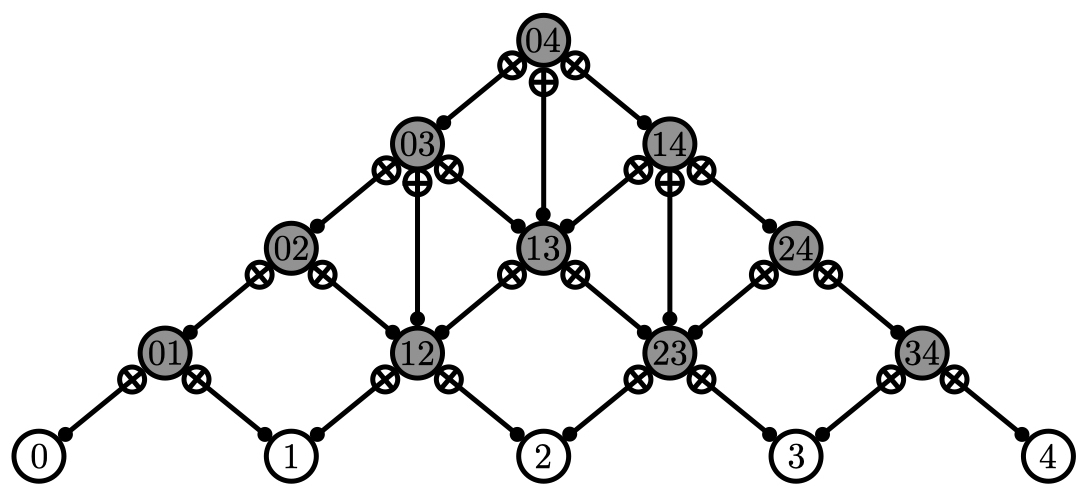}
\caption{An example of an LHZ architecture.}
\label{fig:LHZ_example}
\end{subfigure}
\hfill
\begin{subfigure}{0.45\textwidth}
\centering
\includegraphics[width=\textwidth]{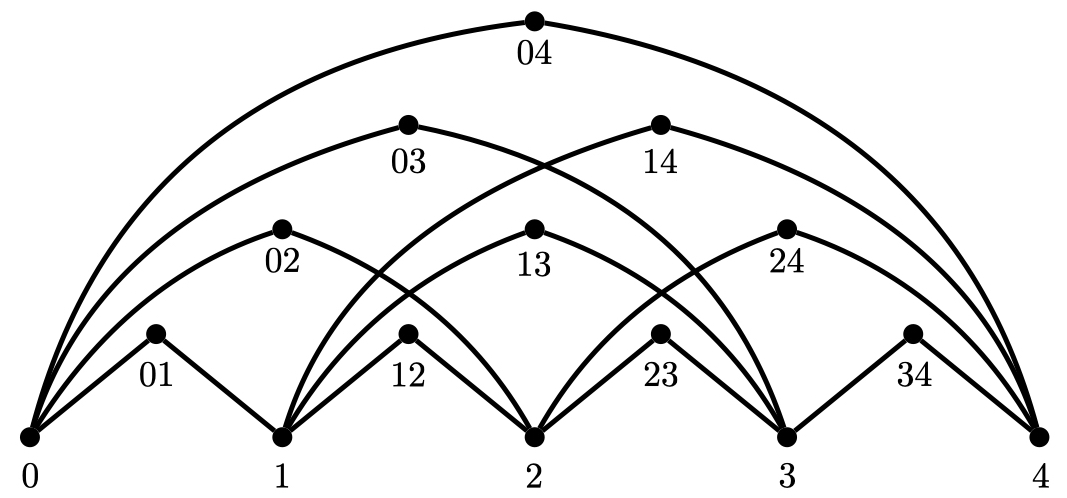}
\caption{The equivalent bipartite graph code.}
\label{fig:graph_eg}
\end{subfigure}
\caption{(a) The parity encoding encodes an input state prepared on the data qubits (white circles). The grey circles denote parity qubits prepared in $\ket{0}$ and CNOT gates are applied to data and parity according to the layout as shown. This parity code in  is equivalent to a graph code for the bipartite graph shown in (b).}
\end{figure*}

\textit{Measurement-Based Quantum Computing - } Measurement-based quantum computing (MBQC) \cite{raussendorf2001one,briegel2009measurement,raussendorf2001computational,raussendorf2003measurement} consists of three things: (i) a highly entangled graph state \cite{Hein_04,hein2006ent}, (ii) a sequence of single qubit projective measurements in certain planes of the Bloch sphere, and (iii) classical, adaptive corrections of future measurements conditioned on prior measurement outcomes. Despite the indeterminacy of quantum measurements, deterministic computation can be performed, provided the sequence of measurements and underlying graph state satisfy certain properties \cite{browne2007generalized}.

Graph states take their name from their connection to mathematical graphs, where vertices correspond to qubits and edges correspond to two-qubit gates. We consider here graph states where a computational input state $\ket{\psi}$ can be prepared on a selected subset of vertices, denoted $I$.

Let $G$ be a graph with vertex set $V$ and edge set $\tilde{E}$. Let $I \subset V$ be a set of distinguished vertices such that $|I| = n$ and $|V \setminus I| = N$. Let $\ket{\psi}$ be a state in the Hilbert space associated to the input vertices, $\H_{I}$. Let $E$ denote the set of edges that are not entirely contained in $I$. The graph state with input is then
\begin{align}
\ket{G_{\psi}} := \prod_{\{v,v'\} \in E} \text{CZ}_{v,v'}\ket{\psi}_{I}\ket{+}^{\otimes N}.
\end{align}
For any input state $\ket{\psi}$, the graph state with input is invariant under the application of any operation in the set $\{K_{v} : v \in V \setminus I\}$ where
\begin{align}
K_{v} := X_{v} \otimes Z_{N_{v}^{G}}
\end{align}
with $N_{v}^{G}$ denoting the set of neighbours of vertex $v$ in $G$ and $Z_{N_{v}^{G}} := \bigotimes_{v' \in N_{v}^{G}} Z_{v'}$. The $K_{v}$ are all mutually independent and the set $\{K_{v} : v \in V \setminus I\}$ generates a $2^{n}$-dimensional subspace of $\mathcal{H}_{v}$, the graph codespace corresponding to $G$ (see e.g., \cite{schlingemann2002logical,Schlingemann_01} for further details on graph codes \footnote{Just as \cite{fellner_22} modified the original LHZ architecture \cite{lechner2015quantum} to contain data qubits, we are considering graph codes where the input qubits remain unmeasured, a modification to be removed in future work.}).

In the measurement-based regime, computation is driven by single-qubit projective measurements restricted to the $XY$-, $XZ$- and $YZ$-planes of the Bloch sphere. A given computation is defined by one specific outcome for each measurement, and the restriction to the given planes allows for the correction of undesired outcomes via an effective application of an appropriate stabiliser element (or products thereof). However, even with these restrictions not every sequence of measurements for a given graph states is possible. The combination of graph state and measurements that do allow for computation are well characterised by a property called gflow which is known to be a necessary and sufficient condition for deterministic MBQC \cite{browne2007generalized} (see \cite{supp} for the definition).


\section{Results}

It is known that every stabiliser code is equivalent to a graph code \cite{schlingemann2001stabilizer} (see also \cite{Nest_04}). The following is an instance of this result using the specific properties exhibited by parity codes.

\begin{Proposition} \label{prop:parity_is_bipartite} Every parity code is local Clifford equivalent to a bipartite graph code, where all data qubits are contained in one partition.
\end{Proposition}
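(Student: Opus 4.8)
The plan is to exploit the fact that every stabiliser generator $K'_{(ij\dots k)}$ of the parity code is purely $Z$-type and to remove the offending $Z$ on each parity qubit by a single Hadamard. Concretely, I would take the local Clifford $U_H := \bigotimes_{v \in V \setminus I} H_v$, the tensor product of Hadamard gates over all parity qubits, leaving the data qubits $I$ untouched. Since $U_H$ is a product of single-qubit Cliffords it is by definition a local Clifford operation, so it suffices to track how conjugation by $U_H$ transforms the stabiliser generators.

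First I would observe that each generator $K'_{(ij\dots k)} = Z_{(ij\dots k)} \otimes Z_i \otimes \cdots \otimes Z_k$ is supported on exactly one parity qubit, namely $(ij\dots k)$, together with the associated data qubits. Conjugation by $U_H$ therefore acts nontrivially only on that single parity-qubit factor, sending $Z_{(ij\dots k)} \mapsto X_{(ij\dots k)}$ while fixing the $Z_i, \dots, Z_k$ on data qubits. Hence $U_H K'_{(ij\dots k)} U_H^\dagger = X_{(ij\dots k)} \otimes Z_i \otimes \cdots \otimes Z_k$, which is precisely a graph-code generator of the form $K_v = X_v \otimes Z_{N_v^G}$ with $v = (ij\dots k)$ and neighbourhood $N_v^G = \{i, \dots, k\}$.

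This identifies the image code as the graph code of the graph $G$ on vertex set $V$ whose edges connect each parity qubit $(ij\dots k)$ to exactly the data qubits $i, \dots, k$ whose parity it encodes. By construction every edge of $G$ joins a vertex of $I$ to a vertex of $V \setminus I$, so $G$ is bipartite with parts $I$ and $V \setminus I$; in particular all data qubits lie in the single partition $I$, as claimed. To finish I would note that since the $K'_{(ij\dots k)}$ are mutually independent and generate the codespace, their images under the Clifford $U_H$ inherit both properties, so the two codes have the same dimension and are genuinely local Clifford equivalent.

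The argument is essentially immediate once the $Z$-type structure is recognised, so I do not anticipate a substantive obstacle; the only point requiring care is the bookkeeping identifying parity qubits with non-input graph vertices, together with the remark that each generator is supported on a \emph{distinct} single parity qubit, which guarantees that $G$ is simple and that conjugation by $U_H$ touches each generator through exactly one Hadamard factor. This also makes transparent that the proposition is a special case of the general equivalence between stabiliser and graph codes, with the rigid parity structure letting us write the local Clifford down explicitly rather than invoking the general Gaussian-elimination procedure.
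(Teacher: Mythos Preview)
Your proposal is correct and follows essentially the same argument as the paper: conjugate each generator $K'_{(ij\dots k)}$ by Hadamards on the parity qubits to obtain graph-code generators $X_{(ij\dots k)}\otimes Z_i\otimes\cdots\otimes Z_k$, and read off the bipartite structure from the fact that every neighbourhood consists solely of data qubits. Your write-up is in fact slightly more careful than the paper's, explicitly noting preservation of independence and simplicity of $G$, but the route is identical.
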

\begin{proof} A parity code is defined by a stabiliser generated by the operators $K'_{(ij...k)} = Z_{(ij...k)} \otimes Z_{i} \otimes Z_{j} \otimes ... \otimes Z_{k}$ for each parity qubit $(ij...k)$. The set of qubits upon which these operators act includes only a single parity qubit. Via conjugation by Hadamards on each parity qubit, we obtain the local Clifford equivalent stabiliser generated by $K_{(ij...)} = X_{(ij...k)} \otimes Z_{i} \otimes Z_{j} \otimes ... \otimes Z_{k}$. Since the $K_{(ij...k)}$ are of the form $X_{(ij...k)} \otimes Z_{N_{(ij...k)}^{G}}$ which generate a graph code for a graph $G$ with edges between parity and data qubits. That is, each neighbourhood $N_{(ij...k)}^{G}$ contains only vertices corresponding to data qubits, which enforces a parity qubit-data qubit bipartition.
\end{proof}

An example of this correspondence is shown in \Cref{fig:LHZ_example} and \Cref{fig:graph_eg}. Note that there exist graph codes that are not local Clifford equivalent to bipartite graph codes, and hence are inequivalent to any parity code.

An immediate consequence of \Cref{prop:parity_is_bipartite} is that, for any $\ket{\psi}$, we have
\begin{align}
\ket{LHZ_{\psi}} = \bigotimes_{v \in V \setminus I}H_{v} \ket{G_{\psi}} \label{eq:PQC_MBQC_equal}
\end{align}
where $G$ denotes the bipartite graph corresponding to the parity code, $I$ denotes the set of vertices corresponding to data qubits and $V$ is the set of all qubits. We will use $V \setminus I$ to denote the set of parity qubits forthwith.

Consider the parity computation $U$ described in \Cref{eq:parity_comp}. For simplicity, let us first consider only the initial layer $l=1$ and drop the parameters $\boldsymbol{\alpha}$, $\boldsymbol{\phi}$ and $\boldsymbol{\theta}$ from the notation since the following holds for all parameter values. Using \Cref{eq:PQC_MBQC_equal}, we get that 
\begin{align}
U_{\text{data}}^{(1)}D_{\text{dec}}^{(1)}U_{\text{enc}} \ket{0}^{\otimes N}\ket{\psi} = U_{\text{data}}^{(1)}D_{\text{dec}}^{(1)} H_{V \setminus I} \ket{G_{\psi}}
\end{align}
where $H_{V \setminus I}$ is shorthand for $\otimes_{v \in V\setminus I} H_{v}$. Both $D_{\text{dec}}^{(1)}$ and $H_{V \setminus I}$ act on the same qubits and can be simplified as:
\begin{align}
D_{\text{dec}}^{(1)}H_{V\setminus I} = \bigotimes_{v \in V \setminus I} \bra{0_{v}}R_{X_{v}}(\theta_{v}^{(1)}).
\end{align}
The operator $\bra{0_{v}}R_{X_{v}}(\theta_{v}^{(1)})$ is a measurement in the YZ-plane of the Bloch sphere, and hence $D_{\text{dec}}^{(1)}H_{V \setminus I}\ket{G_{\psi}}$ is precisely a measurement-based computation where all measurements are restricted to that plane (the issue of measurement corrections is covered below). Denote the output of the first layer as $\ket{\psi^{(1)}}$, which is the resultant state of applying $U_{\text{data}}^{(1)}$ to the output state of the MBQC. The remaining computation is then given by
\begin{align}
\prod_{l =2}^{L} \left(U_{\text{data}}^{(l)} D_{\text{dec}}^{(l)}  U_{\text{enc}} \ket{0}^{\otimes N}\right) \ket{\psi^{(1)}}
\end{align}
for which the above process can be repeated. We have thus shown the following:

\begin{Theorem} \label{thm:parity_is_MBQC} Universal parity quantum computing is repeated measurement-based quantum computation using YZ-plane measurements, interleaved with local rotations.
\end{Theorem}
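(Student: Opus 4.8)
The plan is to argue by iteration over the layers $l = 1, \ldots, L$ of \Cref{eq:parity_comp}, since the discussion preceding the statement has already reduced the single-layer case to YZ-plane MBQC. First I would establish the base case $l=1$ by invoking \Cref{eq:PQC_MBQC_equal} to replace $U_{\text{enc}}\ket{0}^{\otimes N}$ with $H_{V\setminus I}\ket{G_\psi}$, and then absorbing the Hadamards into the decoding operator through the identity $D_{\text{dec}}^{(1)} H_{V\setminus I} = \bigotimes_{v \in V\setminus I} \bra{0_v} R_{X_v}(\theta_v^{(1)})$. Since each factor $\bra{0_v} R_{X_v}(\theta_v^{(1)})$ projects onto one outcome of a YZ-plane measurement, this exhibits the first layer as an MBQC on the bipartite graph state $\ket{G_\psi}$ with every measurement in the YZ-plane, after which the local rotation $U_{\text{data}}^{(1)}$ is applied.

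For the inductive step I would observe that the post-measurement, post-rotation state $\ket{\psi^{(1)}}$ is again a state supported on the $n$ data qubits $I$, hence a legitimate input to the remaining product over $l = 2, \ldots, L$. Each such layer shares precisely the structure of the first: a fresh $U_{\text{enc}}$ re-entangles $\ket{\psi^{(l-1)}}$ with freshly prepared $\ket{0}$ ancillas (or the appropriate subset in the partial-decoding case), followed by YZ-plane measurements and then the local data rotation $U_{\text{data}}^{(l)}$. Applying the base-case identity to each layer in turn then realises the full unitary $U$ as $L$ successive rounds of YZ-plane MBQC, interleaved with the local rotations $U_{\text{data}}^{(l)}$, which is the asserted statement.

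The main obstacle is not this algebraic identification but the determinism requirement intrinsic to MBQC. Fixing a single outcome $\bra{0_v}$ per parity qubit is only legitimate if the byproduct operators arising from the other outcomes can be propagated and corrected consistently across the layers — equivalently, if the underlying bipartite graph together with the chosen YZ-measurement order admits a valid gflow. This is exactly the point flagged as ``the issue of measurement corrections is covered below'', and it is where I expect the real work to lie: one must show that the re-entanglement between layers is compatible with the correction structure, rather than merely iterating the single-layer identity formally.
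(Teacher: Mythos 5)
Your proposal is correct and follows essentially the same route as the paper: the single-layer identity $D_{\text{dec}}^{(1)} H_{V\setminus I} = \bigotimes_{v} \bra{0_v} R_{X_v}(\theta_v^{(1)})$ via \Cref{eq:PQC_MBQC_equal}, iteration over layers with $\ket{\psi^{(l)}}$ as fresh input, and deferral of the determinism/correction question to the gflow analysis of \Cref{thm:MBQC_YZ_I_O}. You also correctly identify that the measurement-correction issue is the substantive remaining point, which the paper likewise flags and resolves in the Supplemental Material.
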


It should be noted that typically, MBQC is done on a fully pre-prepared graph state where input $I$ and output $O$ are distinct. However, proposals for repeated MBQC which de- and re-encode graph codes have been considered previously \cite{zwerger2014hybrid}; see also \cite{nautrup2023measurement} for a recent perspective.

In light of the above, it is prudent to demarcate the parity computing regime with respect to the MBQC regime. One could reasonably ask if there exist YZ-only measurement-based computations on graphs that are not bipartite. However, the following theorem demonstrates that this in fact not the case. The theorem also takes care of any issues regarding correction of measurements (see \cite{supp} for more details).

As mentioned above, MBQC on a graph state $G$ typically includes specifying an input and output set of vertices, denoted by $I$ and $O$ respectively. For a deterministic MBQC to produce a unitary transformation (as opposed to an isometry), we require $|I| = |O|$. 

\begin{Theorem} \label{thm:MBQC_YZ_I_O} MBQC on a (simple, connected) graph $G$ with $|I| = |O|$ and using only YZ-plane measurements is deterministic if and only if $G$ is bipartite with $I$ forming one partition.
\end{Theorem}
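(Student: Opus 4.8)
The plan is to route everything through the gflow characterisation of determinism \cite{browne2007generalized}: a YZ-only MBQC is deterministic if and only if the pattern admits a gflow, i.e.\ a map $g$ from the measured vertices $O^{c}=V\setminus O$ to subsets of $I^{c}=V\setminus I$ together with a partial order $\prec$ satisfying, for every measured $v$, the order conditions $v\prec w$ whenever $w\in g(v)\setminus\{v\}$ or $w\in\Odd(g(v))\setminus\{v\}$, and the YZ-plane condition $v\in g(v)$, $v\notin\Odd(g(v))$ (see \cite{supp}). I would first record the immediate structural consequence of the plane condition: $v\in g(v)\subseteq I^{c}$ forces every measured vertex to lie outside $I$, so $O^{c}\subseteq I^{c}$, i.e.\ $I\subseteq O$; combined with $|I|=|O|$ this gives $I=O$ and identifies the measured set with $V\setminus I$. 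Thus the question reduces to when $V\setminus I$ can be measured deterministically, and bipartiteness with $I$ as one part means precisely that no edge joins two vertices of $V\setminus I$.

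For the easy direction ($\Leftarrow$), assume $G$ is bipartite with parts $I$ and $V\setminus I$. Then every measured $v$ has $N_{v}^{G}\subseteq I$, and I claim $g(v)=\{v\}$ defines a gflow: the plane condition holds since $\Odd(\{v\})=N_{v}^{G}$ and $v\notin N_{v}^{G}$, while the only order constraints produced are $v\prec w$ for $w\in N_{v}^{G}\subseteq I$. Placing all of $I$ above all of $V\setminus I$ makes $\prec$ a genuine partial order (no constraint ever points from one measured vertex to another), so a gflow exists and the computation is deterministic.

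The substance is the converse ($\Rightarrow$), which I would prove by induction on $N=|V\setminus I|$ via a peeling argument. Given any YZ-gflow, take a vertex $v$ maximal among the measured vertices in $\prec$. The order condition forbids $g(v)$ from containing any measured vertex, so $g(v)=\{v\}$; then $\Odd(g(v))=N_{v}^{G}$ can contain no measured vertex either, giving $N_{v}^{G}\subseteq I$, so $v$ has no measured neighbour. To run the induction I would excise $v$ from the remaining correction sets by the symmetric-difference (focusing) move $g(v')\mapsto g(v')\,\triangle\,\{v\}$ for each $v'$ with $v\in g(v')$; because $\Odd(\{v\})=N_{v}^{G}\subseteq I$ and $v'\notin I$, this leaves each plane condition intact and only adds order constraints $v'\prec w$ with $w\in I$, which never create cycles. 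The restriction of the cleaned gflow to $G-v$ is then a YZ-gflow for the smaller instance, so by induction $G-v$ has no edge inside $(V\setminus I)\setminus\{v\}$; together with $N_{v}^{G}\subseteq I$ this shows $G$ has no edge inside $V\setminus I$.

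I expect the focusing step to be the main obstacle: a priori the maximal measured vertex may appear in many other correction sets, and one must check that excising it simultaneously preserves all three gflow requirements (membership, odd-neighbourhood, and acyclicity of $\prec$); the point that makes this work is exactly $N_{v}^{G}\subseteq I$, which confines the side-effects of the symmetric difference to the output vertices. Two bookkeeping points I would address explicitly: connectivity is inessential to the equivalence and can be dropped for the induction (treating components separately), and edges lying entirely inside $I$ neither enter $\ket{G_{\psi}}$ (only $E$ is used) nor any gflow condition (since $g(v)\subseteq V\setminus I$), so ``bipartite'' is understood as the absence of edges inside $V\setminus I$, equivalently one may take $\tilde{E}=E$.
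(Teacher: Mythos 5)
Your proposal is correct, and while it shares the same scaffolding as the paper's argument --- the reduction to gflow via \cite{browne2007generalized}, the derivation of $I=O$ from criterion $5$ (our \Cref{lem:I_equal_O}), the identical construction for the easy direction, and the observation that a $\prec$-maximal measured vertex must have $g(v)=\{v\}$ (our \Cref{lem:po_restricted}) --- it takes a genuinely different route through the substantive ``only if'' direction. The paper proves \Cref{lem:no_g_neighbours}, a direct pairwise statement that no two elements of any correction sets $g(u)$, $g(v)$ can be adjacent, via a case analysis that chases ascending chains $x_{1}\prec x_{2}\prec\cdots$ up to a maximal element and closes them into a cycle contradicting antisymmetry of $\prec$; bipartiteness then follows because $\bigcup_{v}g(v)=V\setminus I$. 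You instead localise all the work at a single maximal vertex $v$, for which criterion $2$ forces $N_{v}^{G}\subseteq I$, and then recurse on $G-v$ after excising $v$ from the remaining correction sets by the symmetric-difference (focusing) move; the decisive observation that $\Odd(\{v\})=N_{v}^{G}\subseteq I$ confines all side effects to output vertices, which can without loss of generality be placed at the top of the partial order, is exactly right, and your checks of the membership, odd-neighbourhood and acyclicity conditions go through (note $\Odd$ is linear under symmetric difference, and $\Odd_{G-v}(K)=\Odd_{G}(K)\setminus\{v\}$ for $K\not\ni v$, so restriction only removes constraints). Your approach buys a cleaner, more modular induction that imports a standard tool from the gflow literature in place of the paper's bespoke chain-termination bookkeeping, at the cost of stating the induction for possibly disconnected graphs; the paper's \Cref{lem:no_g_neighbours} is marginally stronger as a standalone statement but the conclusions coincide. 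Two trivial points to tidy: apply the focusing move only for $v'\neq v$ (otherwise $g(v)$ momentarily becomes empty before $v$ is deleted), and your closing remark that edges inside $I$ affect neither the state nor any gflow condition matches the convention implicit in the paper's own proof.
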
 

The proof makes use of technical lemmas related to gflow which are proved along with the theorem in the Supplemental Material \cite{supp}.


\section{Discussion} 

This work has demonstrated that (i) parity codes are local Clifford equivalent to bipartite graph codes, (ii) as a consequence, parity quantum computing can be understood as repeated MBQC where all measurements are made in the YZ-plane, supplemented by local rotations, and (iii) MBQC with equivalent input and output qubits and using only YZ-plane measurements must use a bipartite graph state. 

Interestingly, these results demonstrate that the universal parity computing regime has effectively singled out YZ-plane unitary MBQC exactly. To the best of our knowledge, the restriction to having equivalent input and output and only YZ-measurements has not been considered before in the MBQC literature. On the other hand, this \textit{is} a restriction of the full MBQC framework, which means that there is ample scope for future investigation into what other aspects of MBQC could be brought to bear on the parity computing regime, and vice versa.

As this work connects two previously distinct bodies of research, there are a number of consequences worth mentioning here. Firstly, our results provide insight into recent research in the parity framework. In \cite{messinger2023constant}, it was noted that the parity measurement-based encoding and decoding procedures can be implemented in constant depth regardless of architecture size. Since the decoding procedure corresponds to measuring vertices in one partition of a bipartite graph, it is clear that all measurements can be done simultaneously and corrected for in the other partition. The encoding procedure can be understood as measuring ancilla vertices of a larger graph state in the $X$-basis, which in particular produces the required bipartite graph (see e.g., \cite{Hein_04,hein2006ent} for a characterisation of graph state deformations under Pauli measurements). It is known that all Pauli-measurements can be performed simultaneously in MBQC \cite{raussendorf2003measurement}.

Secondly, there are a number of potential avenues for future research enabled by the results presented here. In the MBQC literature, there exist multi-particle entanglement purification protocols for bipartite graph states which exhibit favourable error thresholds for realistic scenarios \cite{Dur_03,aschauer_05,zwerger_13}. Having demonstrated the prevalence of bipartite graph states in the parity framework, similar techniques may be of benefit for error mitigation in near-term implementations of parity quantum computations. Furthermore, proposals for fault-tolerant MBQC \cite{Bolt_16,nickerson2018measurement,Raussendorf_FT_07,RAUSSENDORF20062242} and universal blind quantum computation \cite{broadbent2009universal} could provide the foundation for fault-tolerant and cryptographic implementations of the parity framework. Conversely, developments related to quantum optimisation within the parity framework \cite{Lanthaler_22,Dlaska_22,Ender_22} could inspire similar developments in MBQC where application to optimisation problems remains relatively unexplored. A number of these avenues are already being pursued in separate work.

\begin{acknowledgments} We would like to thank B. Klaver and A. Messinger for useful discussions. This research was funded in whole or in part by the Austrian Science Fund (FWF) through the DK-ALM W$1259$-N$27$ and the SFB BeyondC F7102. For open access purposes, the authors have applied a CC BY public copyright license to any author-accepted manuscript version arising from this submission. This work was also co-funded by the European Union (ERC, QuantAI, Project No. $101055129$). Views and opinions expressed are however those of the author(s) only and do not necessarily reflect those of the European Union or the European Research Council. Neither the European Union nor the granting authority can be held responsible for them.
\end{acknowledgments}

\bibliography{MBQC_PQC}

\appendix


\section{Appendix A: Proof of \Cref{thm:MBQC_YZ_I_O}} \label{app:aux_results}

A computation in MBQC is defined by the positive measurement outcomes of a sequence of single qubit projective measurements on a graph state $\ket{G}$. To compute deterministically, a method for effectively correcting for the occurrence of negative outcomes is required. By restricting measurements to the XY-, XZ- or YZ-planes of the Bloch sphere, the positive and negative projections are related by conjugation via $Z$, $XZ$ and $X$ respectively. Since each of these operators features in a tensor factor of some element of the stabiliser for $\ket{G}$, it is possible to correct for a negative outcome by conditionally ``completing'' that stabiliser element. 

The following definition, due to Browne et al. \cite{browne2007generalized}, outlines the criteria that a graph $G$, choice of input and output sets $I$ and $O$, and assignment of measurement planes to qubits must satisfy so that any measurement is correctable.

\begin{Definition} \label{def:gflow} Let $G = (V, E)$ be a graph, $I$ and $O$ be input and output subsets of $V$ respectively, and $\omega: V \setminus O \rightarrow \{XY, XZ, YZ\}$ be a map assigning measurement planes to qubits (the superscript $c$ denotes set complement). The tuple $(G,I,O, \omega)$ has \textbf{gflow} if there exists a map $g: V \setminus O \rightarrow \mathcal{P}(V \setminus I)$, where $\mathcal{P}$ denotes the powerset, and a partial order over $V$ such that the following hold for all $v \in V \setminus O$:
\begin{enumerate}
	\item if $v' \in g(v)$ and $v' \neq v$, then $v < v'$;
	\item if $v' \in \Odd(g(v))$ and $v' \neq v$, then $v < v'$;
	\item if $\omega(v) = XY$, then $v \notin g(v)$ and $v \in \Odd(g(v))$;
	\item if $\omega(v) = XZ$, then $v \in g(v)$ and $v \in \Odd(g(v))$;
	\item if $\omega(v) = YZ$, then $v \in g(v)$ and $v \notin \Odd(g(v))$;
\end{enumerate}
where $\Odd(K) := \{\tilde{v} \in V : |N_{\tilde{v}}^{G} \cap K| = 1 \bmod 2 \}$ for any $K \subseteq V$.
\end{Definition}

An intuition for the above definition is as follows. For each $v$, the sets $g(v)$ and $\Odd(g(v))$ specify the stabiliser element for the correction: $g(v)$ is the set of vertices that receive an $X$ and $\Odd(g(v))$ is the set of vertices that receive a $Z$. The first two conditions of the definition stipulate that the measurement corrections must occur in the future of the measurement, and the last three conditions enforce the correcting stabiliser element to have the correct tensor factor at each vertex corresponding to which plane that vertex was measured in. It was shown in Theorems $2$ and $3$ of \cite{browne2007generalized} that the gflow is a necessary and sufficient condition for deterministic MBQC. Accordingly, an equivalent statement to \Cref{thm:MBQC_YZ_I_O} in the main text is:

\begin{Theorem}\label{thm:rewrite} $(G, I, O, \omega \equiv YZ)$ with $|I| = |O|$ has gflow if and only if $G$ is bipartite with $I$ forming one partition.
\end{Theorem}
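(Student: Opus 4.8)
The plan is to prove the two directions of the rewritten theorem separately, in both cases exploiting that the all-YZ assignment together with the domain of $g$ heavily constrains the situation. A useful preliminary observation, which I would establish first as a small lemma, is that gflow forces $O = I$: condition~5 requires $v \in g(v)$ for every measured vertex $v \in V \setminus O$, yet $g(v) \subseteq V \setminus I$ by the codomain of $g$, so $V \setminus O \subseteq V \setminus I$, i.e. $I \subseteq O$; with $|I| = |O|$ this gives $I = O$. Hence the measured set is exactly $B := V \setminus I$, and I may assume $O = I$ throughout, which keeps the two directions uniform.

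For the ``if'' direction, suppose $G$ is bipartite with parts $I$ and $B$. I would exhibit an explicit gflow by setting $g(v) = \{v\}$ for every $v \in B$ and taking any partial order placing all of $B$ strictly below all of $I$. Then $\Odd(g(v)) = \Odd(\{v\}) = N_v^{G}$, and the verification is immediate: condition~1 is vacuous, condition~5 holds because $v \in \{v\}$ and $v \notin N_v^{G}$ (simplicity), and condition~2 holds because bipartiteness gives $N_v^{G} \subseteq I$, all of which lies above $v$ in the order. This is the short direction.

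The ``only if'' direction is the crux, and I would handle it with a linear-algebra argument over $\mathrm{GF}(2)$. Fix a linear extension $v_1 < \dots < v_m$ of the partial order restricted to $B$ and let $A_{BB}$ be the adjacency matrix of $G$ restricted to $B$ in this ordering (symmetric, zero diagonal). Encode $g$ by the matrix $\Gamma$ whose $i$-th column is the indicator vector of $g(v_i)$ inside $B$; condition~1 together with $v_i \in g(v_i)$ makes $\Gamma$ upper unitriangular, hence invertible. Condition~5 says position $v_i$ of $\Odd(g(v_i))$ vanishes, i.e. $A_{BB}\Gamma$ has zero diagonal, and condition~2 (read off the $B$-entries of $\Odd(g(v_i))$) says every nonzero off-diagonal entry of column $i$ of $A_{BB}\Gamma$ sits at an index strictly greater than $i$; together these make $A_{BB}\Gamma$ strictly upper triangular. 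Therefore $A_{BB} = (A_{BB}\Gamma)\Gamma^{-1}$ is a product of a strictly-upper-triangular matrix with an upper-unitriangular one, hence itself strictly upper triangular; but $A_{BB}$ is symmetric, so $A_{BB} = 0$, meaning $B$ spans no edges. Since the graph carries no edges internal to $I$ (these are excluded from $E$), this exhibits $G$ as bipartite with partition $(I, B)$.

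The main obstacle I anticipate is the ``only if'' bookkeeping: correctly translating conditions~1, 2 and~5 into the triangularity statements about $\Gamma$ and $A_{BB}\Gamma$, and in particular checking that the conditions constraining $\Odd(g(v)) \cap I$ impose no obstruction (they only force certain input vertices to lie above $v$ in the order, which is always achievable and does not interfere with the argument on $B$). The symmetric-strictly-triangular-implies-zero step is the clean payoff; isolating the reduction $O = I$ and the invertibility of $\Gamma$ as the ``technical lemmas'' should make the overall structure transparent.
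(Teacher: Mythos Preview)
Your proposal is correct, and the ``if'' direction together with the preliminary reduction $I = O$ coincide with the paper's argument. For the ``only if'' direction, however, the paper proceeds combinatorially rather than linear-algebraically: it first isolates that any maximal element of the restricted order must have $g(v) = \{v\}$, and then argues by a somewhat lengthy case analysis (chasing chains $x \prec x_1 \prec x_2 \prec \dots$ forced by conditions~1 and~2 until a maximal element is reached and a cycle in the partial order appears) that no vertex in any $g(u)$ can be adjacent to any vertex in any $g(v)$; since $\bigcup_{v \in V\setminus I} g(v) = V \setminus I$, bipartiteness follows. Your $\mathrm{GF}(2)$ encoding is cleaner: the triangularity of $\Gamma$ and of $A_{BB}\Gamma$ packages conditions~1, 2 and~5 simultaneously, and the ``symmetric and strictly triangular implies zero'' step replaces the entire case chase with a one-line observation. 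The paper's route avoids fixing a linear extension or introducing matrix language, but yours makes the mechanism more transparent and dispenses with the auxiliary maximal-element lemma altogether. One cosmetic slip: with the ordering $v_1 < \dots < v_m$ and columns indexed by $i$, the condition $v' \in g(v_i) \Rightarrow v' = v_i$ or $v_i < v'$ forces nonzero entries of column $i$ to sit at rows $j \geq i$, so $\Gamma$ and $A_{BB}\Gamma$ are \emph{lower} (uni)triangular rather than upper; this does not affect the argument.
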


Before giving the proof of the above theorem, we require the following lemmas:

\begin{Lemma} \label{lem:I_equal_O} If $(G,I,O,\omega \equiv YZ)$ with $|I| = |O|$ has gflow, then $I = O$.
\end{Lemma}

\begin{proof} Let $(g, <)$ be a gflow for $(G,I,O,\omega)$ and suppose for a contradiction that $I \neq O$. Since $|I| = |O|$ this means that there exists some $v \in I$ such that $v \notin O$. Accordingly, $v$ is an element of the domain of the map $g$ but not the codomain so $v \notin g(v)$. Via criterion $5$ of \Cref{def:gflow}, this contradicts the assumption that $\omega(v) = YZ$.
\end{proof}

\begin{Lemma} \label{lem:po_restricted} Let $(G,I,O,\omega)$ with $I = O$ have gflow $(g,<)$ and let $\prec$ denote the partial order obtained by restricting $<$ to $V\setminus I$. Then any maximal element $v$ of $\prec$ must be such that $g(v) = \{v\}$ and hence $\omega(v) = YZ$.
\end{Lemma}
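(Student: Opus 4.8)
The plan is to exploit the structure forced by criterion 5 of \Cref{def:gflow} when every vertex is measured in the YZ-plane, combined with the maximality of $v$ under the restricted order $\prec$. First I would record what the gflow conditions give us at the maximal vertex $v$. Since $\omega(v) = YZ$, criterion 5 tells us immediately that $v \in g(v)$ and $v \notin \Odd(g(v))$, so we already know $v$ belongs to its own correction set; the content of the lemma is that $g(v)$ contains \emph{nothing else}. The strategy is to suppose $g(v)$ contains some other vertex $v' \neq v$ and derive a contradiction with maximality.

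The key observation is that the partial order conditions (criteria 1 and 2) say the correction set and its odd neighbourhood lie strictly in the future of $v$. So if $v' \in g(v)$ with $v' \neq v$, then $v < v'$. Now I would argue that such a $v'$ cannot live in $I = O$: the codomain of $g$ is $\mathcal{P}(V \setminus I)$, so every element of $g(v)$ lies in $V \setminus I$, meaning $v' \in V \setminus I$. Therefore $v'$ is an element of the set on which $\prec$ is defined, and $v < v'$ in the ambient order restricts to $v \prec v'$. This directly contradicts the assumption that $v$ is maximal for $\prec$. Hence no such $v'$ exists and $g(v) = \{v\}$. The final clause, that $\omega(v) = YZ$, is just the standing hypothesis $\omega \equiv YZ$, but it is worth restating since it confirms the correction set $g(v) = \{v\}$ is consistent with criterion 5 (indeed $\Odd(\{v\}) = N_v^G$ and $v \notin N_v^G$ for a simple graph, so $v \notin \Odd(g(v))$ holds automatically).

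The main subtlety I would be careful about is the interaction between the ambient order $<$ on all of $V$ and the restricted order $\prec$ on $V \setminus I$. Maximality of $v$ in $\prec$ only rules out elements of $V \setminus I$ above $v$; it says nothing about elements of $I$ above $v$. The reason this gap does not matter is precisely that $g(v) \subseteq V \setminus I$ by the codomain constraint, so any witness $v'$ forcing $v$ to be non-maximal in the relevant sense automatically lands in $V \setminus I$. This is the one place where the restriction of the order to $V \setminus I$ is doing real work, and I would make sure to state the codomain fact explicitly rather than let it pass silently.

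I do not expect any genuine obstacle here: the lemma is essentially a direct unpacking of definitions, with the only care needed being the bookkeeping between $<$ and $\prec$ described above. The contradiction is immediate once one notes that every element of $g(v)$ lies in $V \setminus I$ and criterion 1 places it strictly above $v$.
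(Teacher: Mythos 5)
Your central maximality argument --- any $v' \in g(v)$ with $v' \neq v$ lies in $V \setminus I$ by the codomain of $g$, is forced above $v$ by criterion 1, and hence witnesses $v \prec v'$, contradicting maximality --- is exactly the paper's argument, and your care about the distinction between $<$ and $\prec$ is well placed. However, you have misread the statement: the lemma is asserted for an \emph{arbitrary} plane assignment $\omega$ (note that, unlike Lemmas 1 and 3, it does not carry the hypothesis $\omega \equiv YZ$), and the clause ``hence $\omega(v) = YZ$'' is a \emph{conclusion} to be derived, not a standing hypothesis to be restated. Treating it as a hypothesis creates two gaps. First, your only justification for $v \in g(v)$ --- which you need to upgrade ``$g(v)$ contains no $v' \neq v$'' to ``$g(v) = \{v\}$'' rather than ``$g(v) = \emptyset$'' --- is criterion 5 applied under the assumed plane, so that step is unsupported in the general setting. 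Second, the substantive final claim of the lemma is simply not proved.

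Both gaps close with the paper's elimination argument, which your proof should contain explicitly: $g(v)$ cannot be empty for \emph{any} plane (criteria 4 and 5 demand $v \in g(v)$, and criterion 3 demands $v \in \Odd(g(v))$, which fails for $\Odd(\emptyset) = \emptyset$), so the maximality argument yields $g(v) = \{v\}$ outright; then $v \in g(v)$ rules out $\omega(v) = XY$ by criterion 3, and $v \notin \Odd(\{v\}) = N_v^G$ (simple graph) rules out $\omega(v) = XZ$ by criterion 4, leaving $\omega(v) = YZ$ as the only consistent assignment. This generality matters downstream: the lemma is what lets the paper conclude that maximal vertices of $\prec$ behave like YZ-measured self-corrections, which is the termination device in the proof of Lemma 3.
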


\begin{proof} Since $I = O$, $g$ maps from $V \setminus I$ to $\mathcal{P}(V \setminus I)$. If $v$ is maximal and $g(v) \neq \{v\}$ then there exists a $v' \in g(v) \setminus v$ (no $g(v)$ can be empty otherwise this would contradict criteria $3$, $4$ or $5$ of \Cref{def:gflow}). By criterion $1$ of \Cref{def:gflow}, this would mean that $v < v'$ and hence also $v \prec v'$, contradicting the maximality of $v$. Hence $g(v) = \{v\}$ and this must mean that $\omega(v) = YZ$ since otherwise this would contradict criteria $3$ or $4$.
\end{proof}

It is worth noting that, in the case where $I = O$, the restriction from $<$ to $\prec$ is not a significant one since every $v \in I$ is maximal in $<$.

\begin{Lemma} \label{lem:no_g_neighbours} Let $(G,I,O,\omega \equiv YZ)$ with $I = O$ have gflow. Then for every $u,v \in V\setminus I$ and $w \in g(u)$ and $x \in g(v)$, $w \notin N_{x}^{G}$.
\end{Lemma}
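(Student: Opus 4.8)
The plan is to first recognise that, although \Cref{lem:no_g_neighbours} is phrased in terms of the sets $g(u)$, it is in fact equivalent to the single clean statement that $V \setminus I$ is an independent set of $G$. Indeed, by criterion $5$ of \Cref{def:gflow} (all vertices are $YZ$) we have $v \in g(v)$ for every $v \in V \setminus I$, so $\bigcup_{u \in V \setminus I} g(u) = V \setminus I$. Hence letting $w$ range over all $g(u)$ and $x$ over all $g(v)$ sweeps out every ordered pair of vertices of $V \setminus I$, and the conclusion ``$w \notin N_{x}^{G}$'' for all such pairs is exactly the assertion that no two vertices of $V \setminus I$ are adjacent. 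I would make this reduction explicit at the outset and then prove independence.

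I would establish independence by an extremal argument with respect to the partial order. Suppose for contradiction that $V \setminus I$ is not independent, and set $B := \{ b \in V \setminus I : N_{b}^{G} \cap (V \setminus I) \neq \emptyset \}$, the nonempty collection of vertices having at least one neighbour inside $V \setminus I$. Since the graph is finite, the order $\prec$ (the restriction of $<$ to $V \setminus I$, as in \Cref{lem:po_restricted}) admits a maximal element $b$ of $B$; fix a neighbour $c \in N_{b}^{G} \cap (V \setminus I)$, where $c \neq b$ because $G$ is simple.

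The key computation is to force $c \in \Odd(g(b))$. Write $g(b) = \{b\} \cup T$ with $T = g(b) \setminus \{b\} \subseteq V \setminus I$. By criterion $1$, every $t \in T$ satisfies $b < t$, hence $b \prec t$, and the maximality of $b$ within $B$ then forces $t \notin B$, so $N_{t}^{G} \subseteq I$ and in particular $c \notin N_{t}^{G}$ as $c \in V \setminus I$. Consequently $N_{c}^{G} \cap g(b) = \{b\}$, a set of odd cardinality, so $c \in \Odd(g(b))$ by the definition of \Odd. Applying criterion $2$ to $b$ (with $c \neq b$) yields $b < c$, hence $b \prec c$. But $c \in B$ since it is adjacent to $b \in V \setminus I$, and $b \prec c$ contradicts the maximality of $b$ in $B$. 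Therefore $B = \emptyset$ and $V \setminus I$ is independent, which is the desired statement.

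I expect the crux to be the parity computation $N_{c}^{G} \cap g(b) = \{b\}$: this is the step where the partial-order axioms actually do the work, since it hinges on every element of $g(b)$ other than $b$ lying strictly above $b$ and therefore, by the extremal choice of $b$ in $B$, being adjacent only to input vertices. The remaining ingredients — finiteness supplying a maximal element of $B$, and the translation of the $g$-phrased statement into independence of $V \setminus I$ — are routine once that computation is in hand.
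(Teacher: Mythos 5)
Your proof is correct, but it takes a genuinely different route from the paper's. You first observe that, since criterion $5$ of \Cref{def:gflow} forces $v \in g(v)$ for every $v \in V\setminus I$ and the codomain of $g$ is $\mathcal{P}(V\setminus I)$, the lemma is equivalent to the statement that $V\setminus I$ is an independent set; you then prove independence by choosing a $\prec$-maximal element $b$ of the set $B$ of ``bad'' vertices, computing $N_{c}^{G}\cap g(b)=\{b\}$ for a neighbour $c\in V\setminus I$ of $b$ (every other element of $g(b)$ lies strictly above $b$, hence outside $B$ by maximality, hence has all its neighbours in $I$), and invoking criterion $2$ to get $b \prec c$ with $c \in B$, contradicting maximality. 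The paper instead argues directly on the quantified statement: it assumes $w \in N_{x}^{G}$, splits into cases according to whether $w \in \Odd(g(x))$ and $x \in \Odd(g(w))$, and iteratively constructs chains $x_{1} \prec x_{2} \prec \cdots$ that must terminate at a $\prec$-maximal vertex (where \Cref{lem:po_restricted} forces $g(x_{l+1})=\{x_{l+1}\}$), yielding a cycle in the partial order. Your extremal argument is shorter, avoids the multi-case chain construction entirely, does not need \Cref{lem:po_restricted}, and makes explicit the structural fact (independence of $V\setminus I$) that the paper only extracts afterwards in the remarks following the lemma and in the proof of the main theorem; the paper's version, on the other hand, traces the correction chains explicitly, which arguably gives more operational intuition for how the gflow conditions propagate. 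Both arguments are sound; yours is the cleaner one.
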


Before giving the proof, let us consider some consequences of this lemma. Taking $u = v = x$, we get that $N_{u}^{G} \cap g(u) = \emptyset$. Taking $u = v$ but $x \neq w$, we get that no two elements of $g(u)$ are connected by an edge. Taking $u \neq v$, we get that no element of $g(u)$ is connected to an element of $g(v)$. As a result, there are no edges between any elements in $\bigcup_{v \in V \setminus I} g(v)$, which is used in the proof of the theorem below.

\begin{proof}
Suppose $(g, <)$ is a valid gflow for $(G, I, O, \omega \equiv YZ)$ with $I=O$, which in particular means that $<$ is a valid partial order and criteria $1$, $2$ and $5$ in \Cref{def:gflow} are satisfied. Pursuant to \Cref{lem:po_restricted}, we make use of the partial order $\prec$ which is the restriction of $<$ to $V \setminus I$. 

Let $u,v \in V \setminus I$ and $w \in g(u)$ and $x \in g(v)$. Since $G$ is a simple graph, for $w = x$, $w \notin N_{x}^{G}$ by definition, so we consider $w \neq x$ forthwith. Suppose for a contradiction that $w \in N_{x}^{G}$. If $w \in \Odd(g(x))$ and $x \in \Odd(g(w))$, then a contradiction arises since criterion $2$ requires $w \prec x$ and $x \prec w$. We consider the following two cases in turn: (a) $w \notin \Odd(g(x))$ and $x \in \Odd(g(w))$ or $w \in \Odd(g(x))$ and $x \notin \Odd(g(w))$ and (b) $w \notin \Odd(g(x))$ and $x \notin \Odd(g(w))$.

\textbf{Case (a):} Since the two sub-cases are the same up to swapping $w$ and $x$, we consider without loss of generality the case where $w \notin \Odd(g(x))$ but $x \in \Odd(g(w))$. Since $w \neq x$, the latter gives that $ w \prec x$. Since $x \in N_{w}^{G}$ by assumption, there must exist a $x_{1} \in g(x)\setminus x$ such that $x_{1} \in N_{w}^{G}$ in order for $w \notin \Odd(g(x))$ to hold. If $w \in \Odd(g(x_{1}))$, then a contradiction arises since the criteria of gflow would require that $w \prec x \prec x_{1} \prec w$. If $w \notin \Odd(g(x_{1}))$, then there must exist a $x_{2} \in g(x_{1})\setminus x_{1}$ such that $x_{2} \in N_{w}^{G}$. By iterating the above reasoning, a sequence of vertices $x_{1}, x_{2}, ..., x_{l}$ is generated for which $x_{i+1} \in g(x_{i})\setminus x_{i}$, $x_{i+1} \in N_{w}^{G}$, and $w \notin \Odd(g(x_{i+1}))$ for each $i = 1, ..., l-1$. However, after finitely many steps, this sequence must terminate with some $x_{l+1}$ such that $w \in \Odd(g(x_{l+1}))$: in the worst case, this occurs when $x_{l+1}$ is a maximal element of $\prec$, which from \Cref{lem:po_restricted} means that $g(x_{l+1}) = \{x_{l+1}\}$ and thus $w \in \Odd(g(x_{l+1}))$. The contradiction arises since $w \prec x \prec x_{1} \prec  ... \prec x_{l} \prec x_{l+1} \prec w$.

\textbf{Case (b):} If $w \notin \Odd(g(x))$ and $x \notin \Odd(g(w))$ but $w \in N_{x}^{G}$, then there must exist $w_{1} \in g(w)\setminus w$ and $x_{1} \in g(x)\setminus x$ such that $x_{1} \in N_{w}^{G}$ and $w_{1} \in N_{x}^{G}$. If $w \in \Odd(g(x_{1}))$ and $x \in \Odd(g(w_{1}))$, then a contradiction arises since $w \prec w_{1} \prec x \prec x_{1} \prec w$. If $w \notin \Odd(g(x_{1}))$ and $x \in \Odd(g(w_{1}))$ or $w \in \Odd(g(x_{1}))$ and $x \notin \Odd(g(w_{1}))$, then we are in a similar situation to Case (a) above. By analogous reasoning, we obtain a sequence of vertices $x_{1} \prec x_{2} \prec ... \prec x_{l}$ such that there exists a $x_{l+1} \in g(x_{l})\setminus x_{l}$ for which $w \in \Odd(g(x_{l+1})$, producing a contradiction via $w \prec w_{1} \prec x \prec x_{1} \prec ...\prec x_{l+1} \prec w$.

If both $w \notin \Odd(g(x_{1}))$ and $x \notin \Odd(g(w_{1}))$ hold then we are in a scenario similar to that defining Case (b) to begin with. Accordingly, there must exist $w_{2} \in g(w_{1})\setminus w_{1}$ and $x_{2} \in g(x_{1})\setminus x_{1}$ such that $x_{2} \in N_{w}^{G}$ and $w_{2} \in N_{x}^{G}$. By iterating the above splitting into cases and the corresponding reasoning, we obtain sequences of vertices $w \prec w_{1} \prec ...\prec w_{l}$ and $x \prec x_{1} \prec  ... \prec x_{l}$ which must terminate (in the worst case) with either $g(w_{l}) \setminus w_{l}$ or $g(x_{l}) \setminus x_{l}$ containing a maximal element $w_{l+1}$ or $x_{l+1}$ respectively, and hence $w \in \Odd(g(x_{l+1}))$ or $x \in \Odd(g(w_{l+1}))$. Without loss of generality, suppose that $x \in \Odd(g(w_{l+1}))$. If $w \in \Odd(g(x_{l+1}))$, then we obtain a contradiction via $w \prec w_{1} \prec ... \prec w_{l+1} \prec x \prec x_{1} \prec ...\prec x_{l+1} \prec w$. If $ w \notin \Odd(g(x_{l+1}))$, then we are again in a scenario similar to Case (a), which by analogous reasoning also terminates in a contradiction.
\end{proof}
The proof of \Cref{thm:rewrite} proceeds as follows:
\begin{proof} Suppose $G$ is bipartite and denote one partition by $I$. Consider $(g, <)$ where $g$ is defined by $v \mapsto \{v\}$ for all $v \in  V\setminus I$ and (ii) $<$ is the coarsest partial order such that the vertices in $V \setminus I$ precede those in $I$. That criteria $1$, $2$ and $5$ in the definition of gflow are satisfied for all $v \in V\setminus I$ can be readily verified, hence $(g,<)$ is a valid gflow for $(G,I,O=I,\omega \equiv YZ)$. Clearly, with this choice of $O$, the requirement that $|I| = |O|$ is satisfied.

For the opposite direction, suppose $(G,I,O,\omega)$ with $\omega \equiv YZ$ and $|I| = |O|$ has gflow $(g, <)$. From \Cref{lem:I_equal_O}, we know that $I = O$. From \Cref{lem:no_g_neighbours}, we know that for each $u, v \in V \setminus I$, there are no edges between elements of $g(u)$ and also no edges between elements of $g(u)$ and $g(v)$. Consequently, there are no edges between elements of $g(V \setminus I) := \bigcup_{v \in V \setminus I} g(v)$. Accordingly, we can partition $G$ into those vertices in $g(V\setminus I) \equiv V\setminus I$ and those in $I$.
\end{proof}

\end{document}